\def\BibTeX{{\rm B\kern-.05em{\sc i\kern-.025em b}\kern-.08em
		T\kern-.1667em\lower.7ex\hbox{E}\kern-.125emX}}
\DeclareMathAlphabet{\pazocal}{OMS}{zplm}{m}{n}
\newcommand{\removelatexerror}{\let\@latex@error\@gobble}
\newcounter{storealgline}
\renewcommand{\Indentp}[1]{%
	\advance\leftskip by #1
	\advance\skiptext by -#1
	\advance\skiprule by #1}%
\renewcommand{\Indp}{\algocf@adjustskipindent\Indentp{\algoskipindent}}
\renewcommand{\Indm}{\algocf@adjustskipindent\Indentp{-\algoskipindent}}
\newtheorem{defn}{Definition}
\newtheorem{thm}{Theorem}
\newtheorem{prop}[thm]{Proposition}
\newtheorem{lem}[thm]{Lemma}
\newtheorem{rem}{Remark}
\DeclareMathOperator{\col}{col}
\DeclareMathOperator{\cs}{colspan}
\DeclareMathOperator{\rank}{rank}
\DeclareMathOperator{\diag}{diag}
\newcommand{\B}{\mathcal{B}}
\newcommand{\g}{\mathfrak{g}}
\newcommand{\Hk}{\mathcal{H}}
\newcommand{\F}{\mathcal{F}}
\newcommand{\n}[1]{\mathtt{n}\left(#1\right)}    
\newcommand{\lag}[1]{\mathtt{L}\left(#1\right)}
\newcommand{\sR}[2][]{\ifthenelse{\isempty{#1}}{\mathbb{R}^{#2}}{\mathbb{R}^{#1\times #2}}}
\newcommand{\bint}[2]{{|[#1,#2]}}
\newcommand{\revise}[1]{{\color{black} #1}}
\begin{document}

\title{The Intrinsic State Variable in Fundamental Lemma and Its Use in Stability Design for Data-based Control}

\author{Yitao Yan, Jie Bao and Biao~Huang
\thanks{Y. Yan and J. Bao are with the School of Chemical Engineering, UNSW Sydney, NSW 2052, Australia. (e-mail:  y.yan@unsw.edu.au; j.bao@unsw.edu.au).}
\thanks{B. Huang is with the Department of Chemical and Materials Engineering, University of Alberta, 116 St. and 85 Ave., Edmonton, AB, Canada T6G 2R3. (e-mail: biao.huang@ualberta.ca)}
}
\markboth{Draft}{}

\maketitle
\global\csname @topnum\endcsname 0
\global\csname @botnum\endcsname 0
\begin{abstract}
	In the data-based setting, analysis and control design of dynamical systems using measured data are typically based on overlapping trajectory segments of the input and output variables. This could lead to complex designs because the system internal dynamics, which is typically reflected by the system state variable, is unavailable. In this paper, we will show that the coefficient vector in a modified version of Willems' fundamental lemma is an intrinsic and observable state variable for the system behavior. This argument evolves from the behavioral framework without the requirement of prior knowledge on the causality among system variables or any predefined representation structure (e.g., a state space representation). Such a view allows for the construction of a state map based on the fundamental lemma, bridging the trajectory space and the state space. The state property of the coefficient vector allows for a simple stability design approach using memoryless quadratic functions of it as Lyapunov functions, from which the control action for each step can be explicitly constructed. Using the coefficient vector as a state variable could see wide applications in the analysis and control design of dynamical systems including directions beyond the discussions in this paper.
\end{abstract}

\begin{IEEEkeywords}
	behavioral systems theory, data-based analysis and control, fundamental lemma, property of state.
\end{IEEEkeywords}

\IEEEpeerreviewmaketitle
\allowdisplaybreaks

\section{Introduction}\label{sec:introduction}
The fundamental lemma developed in \cite{Willems:2005} has attracted significant attention in the past decade for data-based analysis and control. It states that the behavior of a controllable linear time-invariant (LTI) dynamical system can be parameterized by the columns of a Hankel matrix formed by one of its trajectories, provided that certain persistent excitation on the trajectory holds. This lemma has been further developed and improved in various aspects including the relaxation of the controllability and persistent excitation requirements \cite{Markovsky:2023}, the introduction of other equivalent representations \cite{Markovsky:2023,Yan:2023} and the representation using multiple trajectories in case of insufficient excitation on individual ones \cite{vanWaarde:2020}. In the case when noise is present in the measured trajectories, various representations have been developed to approximate the true system behavior including low rank approximation \cite{Markovsky:2012}, posing rank constraints \cite{Markovsky:2022}, trajectory averaging \cite{Sassella:2021} and the use of noise statistical properties \cite{Yan:2023a}. There have also been developments beyond stand-alone LTI systems including interconnected LTI system networks \cite{Yan:2023,Allibhoy:2021}, linear parameter-varying systems \cite{Toth:2011} and nonlinear systems \cite{Berberich:2022,Alsalti:2021,Han:2024}. On another direction, a stochastic version of the lemma has been developed in \cite{Pan:2022} using polynomial chaos expansion.

In addition to the developments on the lemma itself, this data-based representation has become the foundation on which many recent data-based control approaches are developed. A review of recent developments can be found in \cite{Markovsky:2021}. One of the most notable approaches is the data-enabled predictive control (DeePC) algorithm \cite{Coulson:2019}, which is based on the data-driven control algorithm in \cite{Markovsky:2008} with certain regulatory constraints to handle the noise in the measurements. This algorithm has later been robustified in \cite{Berberich:2020a} to guarantee stability in the case of small measurement noises. In the case with larger noise, based on a sufficiently accurate approximate behavior (e.g., constructed via the methods in \cite{Yan:2023a}), an optimal control algorithm has recently been developed in \cite{Yan:2024}. Outside control design, there have also been developments in other directions including dissipativity analysis \cite{Romer:2019}, trajectory filtering \cite{Alpago:2020} and data-based fault diagnosis \cite{Li:2022}, as well as applications to physical systems \cite{Huang:2021,Elokda:2021,Ozmeteler:2024}.

Being a result rooted from the behavioral systems theory, the fundamental lemma inherits various characteristics of the behavioral framework such as the focus on system external variables (e.g., manipulated variables and controlled variables) without any structural constraints on the representation, the need for state information or causality assumptions among the variables. This leads to a general parameterization of the system trajectory space constructed directly from data, whose direction of signal flow among variables may be unclear and state measurement is unavailable. As the state variable encodes the system memory, the lack of state information needs to be compensated by the use of sufficiently long trajectory. This has lead to the development of various theories (e.g., Lyapunov stability theory ) focusing on only the external variables \cite{Willems:1998}. While they provide valuable insights, control design based on these theories is a more complicated task because the Lyapunov functions need to be defined on the system trajectories, and the length of the trajectory segments required to construct these functions becomes a trial-and-error problem in practice. This approach also leads to a less structured Lyapunov function (e.g., positive semi-definiteness instead of positive definiteness \cite{Willems:1998}), making control design more difficult. Other than approaches that assume the availability of the state measurements (e.g., \cite{daSilva:2018}), there are approaches that carry out control design by using system trajectories as fictitious states \cite{dePersis:2019,Alpago:2020} for LTI systems with certain assumptions (e.g., strictly proper systems).

In this paper, we show that the fundamental lemma is naturally equipped with an \emph{intrinsic} state variable. This argument is carried out in the behavioral framework, and does not pose any structural constraints on the system representation or causality among system variables (e.g., requirement on the \emph{a priori} partition of input and output variables). Such a view leads to a simple and well-structured stability design approach using Lyapunov functions in the form of memoryless quadratic functions of this state variable, from which the controller behavior can be directly obtained.

The rest of this paper is organized as follows. Section \ref{sec:preliminaries} introduces relevant concepts in the behavioral systems theory. Section \ref{sec:stateProof} proves the state property of the parameterizer and develops its state map. These are the key ingredients for the stability design based on memoryless Lyapunov functions, which is presented in Section \ref{sec:stability}. An illustrative example is provided in Section \ref{sec:example} to show how the proposed approach is used. Finally, we conclude this paper and point to possible future directions to which this approach can be applied.

{\bf Notations.} We use the standard notations $\mathbb{R}$, $\mathbb{R}^\mathsf{n}$, $\mathbb{R}^{\mathsf{m}\times\mathsf{n}}$, etc. For a vector $w$, its dimension is denoted by $\mathsf{w}$. The set of non-negative integers is denoted as $\mathbb{Z}_{\geq0}$. An $\mathsf{n\times n}$ identity matrix and an ${\mathsf{m}\times\mathsf{n}}$ zero matrix are denoted as $I_\mathsf{n}$ and $0_{\mathsf{m}\times\mathsf{n}}$, respectively, and their subscripts are omitted when they are clear from context. For a matrix $A$, its transpose, inverse (if it exists) and Moore-Penrose inverse are denoted as $A^\top$, $A^{-1}$ and $A^\dagger$, respectively. Denote $A^\perp\coloneqq I-A^\dagger A$. For two matrices $A$ and $B$ with compatible dimensions, we denote $\col(A,B)\coloneqq\begin{bmatrix}
    A^\top & B^\top
\end{bmatrix}^\top$. The block diagonal matrix formed by matrices $A$ and $B$ is denoted by $\diag(A,B)$. Finally, the quadratic form $w^\top Mw$ with $M=M^\top$ is denoted as $\|w\|_M^2$.
\section{Preliminaries}\label{sec:preliminaries}
A dynamical system in the behavioral framework is defined as a triple $\Sigma=(\mathbb{T},\mathbb{W},\B)$, where $\mathbb{T}$ is the time axis, $\mathbb{W}$ is the signal space, and $\B\subset\mathbb{W}^\mathbb{T}$ is the system behavior \cite{Willems:1991}. In this paper, unless otherwise specified, we assume that $\mathbb{T}\subset\mathbb{Z}_{\geq0}$. A trajectory of the system generic variable $w$ (which is referred to as the \emph{manifest variable}) is denoted as $\tilde{w}$. The segment of $\tilde{w}$ in the interval $[k_1,k_2]$ is denoted as $\tilde{w}_\bint{k_1}{k_2}$, and the set of segments for all $\tilde{w}\in\B$ forms the behavior restricted to the interval $[k_1,k_2]$, denoted as \cite{Willems:2005}
\begin{equation}
    \B_\bint{k_1}{k_2}=\left\{\tilde{w}\mid\exists\tilde{w}'\in\B, \tilde{w}=\tilde{w}'_\bint{k_1}{k_2}\right\}.
\end{equation}
A dynamical system is time-invariant if $\sigma\B\subset\B$, where $\sigma$ is a shifting operator such that $\sigma w_k=w_{k+1}$. Time-invariant systems have finite memory spans, i.e., for a given time-invariant system, it takes a finite number of time steps before the past and future trajectories become independent. The smallest number of steps required to achieve this independence is called the \emph{lag} of the system and denoted as $\lag{\B}$. This characteristic of time-invariant systems allows for their trajectory segments to be weaved, forming longer ones, according to the following lemma \cite{Markovsky:2005}.
\begin{lem}[Trajectory Weaving]\label{lem:weaving}
Let $\Sigma$ be a time-invariant system and let $\tilde{w}^1,\tilde{w}^2\in\B$. If $\tilde{w}_{\bint{k-l}{k-1}}^1=\tilde{w}_{\bint{k-l}{k-1}}^2$ with $l\geq\lag{\B}$, then
\begin{equation}
    \tilde{w}_{\bint{0}{k-1}}^1\wedge\tilde{w}_{\bint{k}{\infty}}^2\in\B,
\end{equation}
where $\wedge$ denotes the concatenation of the two trajectory segments.
\end{lem}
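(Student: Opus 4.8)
The plan is to prove the Trajectory Weaving Lemma by exploiting the defining property of the lag $\lag{\B}$: that after $\lag{\B}$ steps, the past and future trajectory segments become independent in the sense that any compatible overlap of this length can be freely continued. The key conceptual tool is the characterization of the lag as the smallest integer $l$ such that knowing a trajectory on any window of length $l$ determines which future continuations are admissible, independently of what happened before that window. I would begin by recalling that time-invariance gives $\sigma\B\subset\B$, and in fact for complete systems $\sigma$ acts as a bijection on the behavior, so that segments can be shifted to any alignment without leaving $\B$.

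First I would set up the concatenation $\tilde{w}\coloneqq\tilde{w}^1_{\bint{0}{k-1}}\wedge\tilde{w}^2_{\bint{k}{\infty}}$ explicitly, defining $\tilde{w}_j=\tilde{w}^1_j$ for $j\in[0,k-1]$ and $\tilde{w}_j=\tilde{w}^2_j$ for $j\geq k$. The goal is to show $\tilde{w}\in\B$. The essential observation is that on the overlap window $[k-l,k-1]$, by hypothesis $\tilde{w}^1_{\bint{k-l}{k-1}}=\tilde{w}^2_{\bint{k-l}{k-1}}$, so the candidate trajectory $\tilde{w}$ agrees with $\tilde{w}^1$ on all of $[0,k-1]$ and agrees with $\tilde{w}^2$ on the window $[k-l,k-1]$ immediately preceding the switch point. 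Because $l\geq\lag{\B}$, this window is long enough to carry all of the system's memory.

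Second, I would invoke the intrinsic definition of the lag: since $\tilde{w}^2\in\B$ and $\tilde{w}^2_{\bint{k-l}{k-1}}$ is a length-$l$ window with $l\geq\lag{\B}$, the admissible future $\tilde{w}^2_{\bint{k}{\infty}}$ depends only on this window and not on the earlier part $\tilde{w}^2_{\bint{0}{k-l-1}}$. Since $\tilde{w}^1$ reproduces exactly this same window, the past $\tilde{w}^1_{\bint{0}{k-1}}$ is a legitimate history that can be followed by the future $\tilde{w}^2_{\bint{k}{\infty}}$. Formally, the memory-$\lag{\B}$ property states precisely that for any two trajectories agreeing on a window of length at least $\lag{\B}$, the past of one may be concatenated with the future of the other and remain in $\B$; the lemma is essentially a restatement of this property, and the argument reduces to verifying that the hypotheses match the window-length condition.

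The main obstacle I anticipate is making the notion of ``past and future becoming independent after $\lag{\B}$ steps'' fully rigorous from whatever formal definition of the lag the paper adopts, since the excerpt introduces the lag only descriptively. If the lag is defined via a kernel (polynomial-matrix) representation of $\B$, I would instead argue through the state-space or latent-variable realization: a window of length $\lag{\B}$ determines the system state at time $k$, equal states produce identical reachable futures, and the concatenation is then admissible because both trajectories induce the same state at the switching instant. The bulk of the work is therefore in pinning down the equivalence between the descriptive memory property and whatever representation is used, after which the weaving conclusion follows directly.
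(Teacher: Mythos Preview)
The paper does not prove this lemma; it is stated with a citation to \cite{Markovsky:2005} and used as a known background result. There is therefore no ``paper's own proof'' to compare your proposal against.

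That said, your proposal is essentially sound but somewhat circular as written: the descriptive characterization of $\lag{\B}$ given in the paper (``the smallest number of steps required before past and future trajectories become independent'') is more or less the content of the lemma itself, so invoking it as the key step is close to assuming the conclusion. Your final paragraph correctly identifies this issue and points to the right resolution: one must start from a concrete definition of the lag (e.g., the maximal row degree in a minimal kernel representation, or the observability index of a minimal state realization) and then \emph{derive} the memory property. The standard argument in the cited literature goes through a minimal state realization: a window of length $\lag{\B}$ of the manifest variable determines the state at the end of the window, so $\tilde{w}^1$ and $\tilde{w}^2$ induce the same state at time $k$, and the property of state (Definition~\ref{defn:state}) then gives the concatenation. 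Your sketch of this route is correct; the ``bulk of the work'' you anticipate is indeed where the actual proof lives.
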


Let $\Sigma=(\mathbb{T},\mathbb{W}_1\times\mathbb{W}_2,\B)$ be a dynamical system with manifest variable partitioned as $w=(w_1,w_2)$. If, for any $\tilde{w}_2\in\mathbb{W}_1^\mathbb{T}$, there exists $\tilde{w}_2$ such that $(\tilde{w}_1,\tilde{w}_2)\in\B$, then $w_1$ is a \emph{free} component of $w$ \cite{Willems:1991,Polderman:1998}. Such a partition can be an input/output partition if all elements in $w_1$ are free while none in $w_2$ is. An important concept associated with a partitioned manifest variable is the observability of one component from the other. In the behavioral framework, $w_2$ is observable from $w_1$ if, for two trajectories $(\tilde{w}_1^1,\tilde{w}_2^1), (\tilde{w}_1^2,\tilde{w}_2^2)\in\B$, $\tilde{w}_1^1=\tilde{w}_1^2$ implies $\tilde{w}_2^1=\tilde{w}_2^2$. In other words, trajectories of $w_2$ can be uniquely identified from those of $w_1$.

The behavior of a dynamical system can also be defined with the aid of an auxiliary variable called the \emph{latent variable}. In such a case, the full system is the quadruple $\Sigma^{full}=(\mathbb{T},\mathbb{W},\mathbb{L},\B^{full})$, where $\B^{full}\subset(\mathbb{W}\times\mathbb{L})^\mathbb{T}$ is the full behavior, whose corresponding manifest behavior is $\B=\{\tilde{w}\mid\exists\tilde{\ell}, (\tilde{w},\tilde{\ell})\in\B^{full}\}$, where $\ell$ is the latent variable. The state variable is an important class of latent variable that possess the property of state, which enables the  determination of whether two trajectories of a dynamical system can be concatenated together. The original definition of the state property by Willems (e.g., in \cite{Willems:1991} and \cite{Polderman:1998}) is given in continuous time. We therefore define a discrete-time counterpart of the state property that can be used in the data-based setting.
\begin{defn}[Property of State]\label{defn:state}
Denote $\Sigma^{full}=(\mathbb{T},\mathbb{W},\mathbb{L},\B^{full})$ as a latent variable dynamical system. The latent variable $\ell$ has the property of state if, for two trajectories $(\tilde{w}^1,\tilde{\ell}^1), (\tilde{w}^2,\tilde{\ell}^2)\in\B^{full}$, that $\ell_{k-1}^1=\ell_{k-1}^2$ implies 
\begin{equation}
    (\tilde{w}^1_\bint{0}{k-1}\wedge\tilde{w}^2_\bint{k}{\infty},\tilde{\ell}^1_\bint{0}{k-1}\wedge\tilde{\ell}^2_\bint{k}{\infty})\in\B^{full}
\end{equation}
for all $k\in\mathbb{T}$.
\end{defn}
For a time-invariant system, the smallest dimension of the state variable is called the state cardinality and denoted as $\n{\B}$.

A dynamical system is linear time-invariant (LTI) if, in addition to time-invariance, the signal space $\mathbb{W}$ is a vector space and $\B$ is a linear subspace of $\mathbb{W}^\mathbb{T}$ that is closed in the topology of pointwise convergence \cite{Willems:2005}. For such an LTI system $\Sigma$, its $(L+1)$-step behavior can be represented as the column span of an $(L+1)$-step Hankel matrix constructed from one of its $(T+1)$-step trajectories, denoted as
\begin{equation}\label{eq:HankelMatrix}
    \Hk=\begin{bmatrix}
		w_0 & w_1 & \cdots & w_{T-L}\\
        w_1 & w_2 & \cdots & w_{T-L+1}\\
		\vdots  & \vdots & \ddots & \vdots\\
		w_L & w_{L+1} & \cdots& w_T
	\end{bmatrix}\in\mathbb{R}^{(L+1)\mathsf{w}\times\mathsf{g}},
\end{equation}
where $\mathsf{g}=T-L+1$. This has recently become a foundation for many developments of data-based analysis and control design approach, and is known as the \emph{fundamental lemma} \cite{Willems:2005}. Since its initial appearance, the lemma has been further modified and improved in various ways (e.g., \cite{Markovsky:2023}). The key results useful to this paper are summarized as follows.
\begin{lem}[Fundamental Lemma]\label{lem:fundLemma}
Let $\Sigma$ be an LTI system whose manifest variable admits an input/output partition $w=(u,y)$. Let $\tilde{w}_\bint{0}{T}\in\B_\bint{0}{T}$. Assuming that $L\geq\lag{\B}$, then $\cs(\Hk)=\B_\bint{0}{L}$ if and only if $\rank(\Hk)=(L+1)\mathsf{u}+\n{\B}$. In such a case, for any trajectory $\tilde{v}\in\B_\bint{0}{L}$, there exists $g\in\sR{\mathsf{g}}$ such that
	\begin{equation}\label{eq:fundLemma}
		\tilde{v}=\Hk g.
	\end{equation}
\end{lem}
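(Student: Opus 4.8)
The plan is to treat both $\cs(\Hk)$ and $\B_\bint{0}{L}$ as finite-dimensional linear subspaces of $\sR{(L+1)\mathsf{w}}$, reduce the claimed set equality to an equality of dimensions, and then identify $\dim\B_\bint{0}{L}$ with the number $(L+1)\mathsf{u}+\n{\B}$ appearing in the rank condition. Once that identification is in hand, both directions of the equivalence follow from a one-line dimension argument, and the concluding existence of $g$ is immediate from the meaning of the column span.

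First I would establish the containment $\cs(\Hk)\subseteq\B_\bint{0}{L}$, which holds without any rank assumption. The $i$-th column of $\Hk$ is the segment $\tilde{w}_\bint{i}{i+L}$ of the generating trajectory; applying the shift $\sigma^i$ and using time-invariance $\sigma\B\subset\B$ shows that each such segment, read on $[0,L]$, belongs to $\B_\bint{0}{L}$. Because $\Sigma$ is LTI, $\B$ and hence $\B_\bint{0}{L}$ are linear subspaces, so they are closed under the linear combinations that generate $\cs(\Hk)$; this gives the containment.

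The main step, and the one I expect to be the chief obstacle, is the dimension count $\dim\B_\bint{0}{L}=(L+1)\mathsf{u}+\n{\B}$. Here I would pass to a minimal state representation of $\Sigma$ and consider the linear map sending a pair (initial state in $\sR{\n{\B}}$, free input segment in $\sR{(L+1)\mathsf{u}}$) to the trajectory it produces on $[0,L]$. Surjectivity onto $\B_\bint{0}{L}$ follows from the input/output partition, which supplies exactly $\mathsf{u}$ free directions at each of the $L+1$ time steps together with a state accounting for the system memory; injectivity is where the hypothesis $L\geq\lag{\B}$ enters, since a window at least as long as the lag makes the state observable from the trajectory, so distinct initial states yield distinct segments. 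The delicate point is to use the behavioral meanings of $\n{\B}$ as the minimal state cardinality and of $\lag{\B}$ as the observability horizon consistently, rather than committing to a particular representation.

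Finally I would combine the pieces. Since $\cs(\Hk)\subseteq\B_\bint{0}{L}$ and the latter has dimension $(L+1)\mathsf{u}+\n{\B}$, the two subspaces coincide if and only if $\cs(\Hk)$ has that same dimension, i.e. $\rank(\Hk)=(L+1)\mathsf{u}+\n{\B}$, using the elementary fact that a subspace contained in another of equal finite dimension must be the whole of it. This yields the stated equivalence in both directions. For the last assertion, $\cs(\Hk)=\B_\bint{0}{L}$ means every $\tilde{v}\in\B_\bint{0}{L}$ lies in the column span of $\Hk$, which by definition provides $g\in\sR{\mathsf{g}}$ with $\tilde{v}=\Hk g$.
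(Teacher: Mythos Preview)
The paper does not prove this lemma: it is stated in the Preliminaries section as a summary of known results from \cite{Willems:2005} and \cite{Markovsky:2023}, so there is no in-paper proof to compare against. Your proposal is nonetheless correct and follows exactly the standard argument found in those references: the unconditional inclusion $\cs(\Hk)\subseteq\B_\bint{0}{L}$ from time-invariance and linearity, the dimension formula $\dim\B_\bint{0}{L}=(L+1)\mathsf{u}+\n{\B}$ obtained from a minimal state representation with the hypothesis $L\geq\lag{\B}$ supplying observability of the initial state, and the conclusion by equality of dimensions of nested finite-dimensional subspaces. The only point worth tightening is your injectivity step: you should be explicit that observability here means two distinct initial states with the \emph{same} input segment on $[0,L]$ produce distinct outputs, so that the map $(\text{initial state},\,\tilde u_\bint{0}{L})\mapsto\tilde w_\bint{0}{L}$ is injective; the behavioral definition of $\lag{\B}$ guarantees precisely this once the window length exceeds it.
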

Lemma \ref{lem:fundLemma} provides a way to parameterize all $(L+1)$-step trajectories by choosing a the value of $g$, which is hereinafter referred to as the \emph{parameterizer}. The assumption of sufficiently long trajectory segment in Lemma \ref{lem:fundLemma} is crucial in trajectory simulation and predictive control algorithms (e.g.,  \cite{Markovsky:2008}). We therefore assume for the rest of this paper that $L\geq\lag{\B}$.

\section{parameterizer as a State Variable}\label{sec:stateProof}
Let $\Sigma=(\mathbb{T},\mathbb{W},\B)$ be an LTI system. Due to its time-invariant nature, \eqref{eq:fundLemma} holds true for any segments in a trajectory $w\in\B$ provided that its length is larger than $\lag{\B}$. Let $\tilde{w}_k$ denote the trajectory segment $\tilde{w}_\bint{k-L}{k}\in\B_\bint{k-L}{k}$ at the $k$th interval $[k-L,k]$. This segment can also be parameterized using the columns of $\Hk$ as
\begin{equation}\label{eq:wFullBehavior}
    \tilde{w}_k=\Hk g_k,
\end{equation}
where $\Hk$ is given by \eqref{eq:HankelMatrix}. By applying Lemma \ref{lem:weaving} recursively from $k=0$ (with $\tilde{w}_\bint{-L}{-1}$ as the initial trajectory) using $\tilde{w}_\bint{k-L}{k-1}$ as the overlapping steps between the $(k-1)$-th and the $k$th segment and aligning $w_k$ with $g_k$, a trajectory pair $(\tilde{w},\tilde{g})$ can be constructed with $\tilde{w}$ a trajectory of $\B$. This means that \eqref{eq:wFullBehavior} defines a full behavior for a latent variable dynamical system $\Sigma^{full}=(\mathbb{T},\mathbb{W},\mathbb{G},\B^{full})$ with $w$ the manifest variable, $g$ the latent variable, and $\B$ the corresponding manifest behavior. Furthermore, let the singular value decomposition (SVD) of $\Hk$ be
\begin{equation}
    \Hk=\begin{bmatrix}
        U_1 & U_2
    \end{bmatrix}\begin{bmatrix}
        S & 0\\ 0 & 0
    \end{bmatrix}
    \begin{bmatrix}
        V_1^\top \\ V_2^\top
    \end{bmatrix}
\end{equation}
with $S$ a diagonal matrix containing all non-zero singular values of $\Hk$. Define $\F=U_1$ (which has full column rank) and \revise{$\g=SV_1^\top g$}. Then the full behavior represented by
	\begin{equation}\label{eq:fullColRank}
		\tilde{w}_k=\F \g_k
	\end{equation}
and that represented by \eqref{eq:wFullBehavior} have the same manifest behavior, making the parameterizer $\g$ a latent variable of the system \cite{Yan:2023a}.

We now show that $\g$, in addition to being a latent variable, also has the property of state, and is hence a state variable for $\B$.
\begin{thm}\label{thm:gState}
    Let $\Sigma$ be an LTI system whose full behavior $\B^{full}$ is represented by \eqref{eq:fullColRank}. The following statements are true.
    \begin{enumerate}[(i)]
        \item $\g$ is a state variable for $\B$;
        \item $\g$ is observable from $w$.
    \end{enumerate}
\end{thm}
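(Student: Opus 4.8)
The plan rests on one structural observation that I would establish first: because $\F=U_1$ is read off the SVD of $\Hk$, its columns are orthonormal, so $\F^\dagger=\F^\top$ and $\F^\dagger\F=I$. Left-multiplying \eqref{eq:fullColRank} by $\F^\top$ therefore inverts it, giving $\g_k=\F^\top\tilde{w}_k=\F^\top\tilde{w}_\bint{k-L}{k}$. The upshot is that the latent value $\g_k$ is a \emph{fixed linear function of the single window} $\tilde{w}_\bint{k-L}{k}$. Statement (ii) then drops out at once: if two full trajectories share the same manifest part, $\tilde{w}^1=\tilde{w}^2$, then $\g^1_k=\F^\top\tilde{w}^1_\bint{k-L}{k}=\F^\top\tilde{w}^2_\bint{k-L}{k}=\g^2_k$ for every $k$, so $\tilde{\g}^1=\tilde{\g}^2$ and $\g$ is observable from $w$.

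For the state property (i), I would first translate the hypothesis $\g^1_{k-1}=\g^2_{k-1}$ of Definition \ref{defn:state} into a statement about the manifest windows. Since every admissible window lies in $\cs(\Hk)=\cs(\F)$ and $\F^\top$ restricted to $\cs(\F)$ is injective (for $x\in\cs(\F)$ one has $x=\F\F^\top x$), the map $\tilde{w}_\bint{k-1-L}{k-1}\mapsto\g_{k-1}$ is a bijection between $\B_\bint{0}{L}$ and the reachable $\g$-values. Hence $\g^1_{k-1}=\g^2_{k-1}$ forces the window equality $\tilde{w}^1_\bint{k-1-L}{k-1}=\tilde{w}^2_\bint{k-1-L}{k-1}$, i.e. the two trajectories agree on an overlap of $L+1$ steps ending at $k-1$.

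This window equality is exactly the premise of the Trajectory Weaving Lemma (Lemma \ref{lem:weaving}) with overlap length $l=L+1\geq\lag{\B}$ and concatenation point $k$, so it yields $\hat{w}\coloneqq\tilde{w}^1_\bint{0}{k-1}\wedge\tilde{w}^2_\bint{k}{\infty}\in\B$. The remaining task is to confirm that the correspondingly weaved latent signal $\hat{\g}\coloneqq\tilde{\g}^1_\bint{0}{k-1}\wedge\tilde{\g}^2_\bint{k}{\infty}$ is precisely the one \eqref{eq:fullColRank} assigns to $\hat{w}$, i.e. that $\hat{\g}_t=\F^\top\hat{w}_\bint{t-L}{t}$ for all $t$. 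For $t\leq k-1$ the window $\bint{t-L}{t}$ lies entirely in the past branch and for $t\geq k+L$ entirely in the future branch, so in these ranges the identity is inherited directly from $(\tilde{w}^1,\tilde{\g}^1),(\tilde{w}^2,\tilde{\g}^2)\in\B^{full}$.

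I expect the only real obstacle to be the band of straddling windows $k\leq t\leq k+L-1$, where $\hat{w}_\bint{t-L}{t}$ splices the past piece $\tilde{w}^1_\bint{t-L}{k-1}$ onto the future piece $\tilde{w}^2_\bint{k}{t}$, while the target value $\hat{\g}_t=\g^2_t$ equals $\F^\top\tilde{w}^2_\bint{t-L}{t}$. To reconcile these I would invoke the window equality from the second step: since it holds on the whole interval $\bint{k-1-L}{k-1}\supseteq\bint{t-L}{k-1}$ for every such $t$, the past piece satisfies $\tilde{w}^1_\bint{t-L}{k-1}=\tilde{w}^2_\bint{t-L}{k-1}$, so the straddling window coincides with $\tilde{w}^2_\bint{t-L}{t}$ and hence $\F^\top\hat{w}_\bint{t-L}{t}=\g^2_t=\hat{\g}_t$. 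With all three ranges covered, $(\hat{w},\hat{\g})\in\B^{full}$, which is the property of state. I anticipate no difficulty beyond bookkeeping the window indices, provided the $L\geq\lag{\B}$ assumption is used to guarantee the $L+1$-step overlap is long enough to apply the weaving lemma.
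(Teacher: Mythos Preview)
Your argument is correct. For (ii) it coincides with the paper's proof (both use that $\F$ has full column rank), but for (i) you take a different route. The paper first rewrites \eqref{eq:fullColRank} as the first-order recursion
\[
\F\g_k-\begin{bmatrix}\Pi_p\F\\0\end{bmatrix}\g_{k-1}-\begin{bmatrix}0\\I\end{bmatrix}w_k=0,
\]
reads off $\lag{\B^{full}}\leq 1$, and then applies Lemma~\ref{lem:weaving} directly to the \emph{full} behavior $\B^{full}$ with a one-step overlap at $k-1$ (since $\g^1_{k-1}=\g^2_{k-1}$ already forces $(w^1_{k-1},\g^1_{k-1})=(w^2_{k-1},\g^2_{k-1})$). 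You instead apply the weaving lemma only to the \emph{manifest} behavior $\B$, exploiting the much longer $(L{+}1)$-step overlap that comes for free from $\tilde w^1_{\bint{k-1-L}{k-1}}=\tilde w^2_{\bint{k-1-L}{k-1}}$, and then reattach the latent signal by checking $\hat\g_t=\F^\top\hat w_{\bint{t-L}{t}}$ window-by-window. The paper's detour through $\lag{\B^{full}}$ is slicker and yields the structural fact that the full behavior is first order; your approach is more elementary, needing nothing beyond the manifest weaving lemma and the orthogonal-projector identity $\F\F^\top|_{\cs(\F)}=I$, at the cost of the straddling-window bookkeeping. One minor simplification: you do not need the bijection argument to pass from $\g^1_{k-1}=\g^2_{k-1}$ to the window equality; the forward map $\tilde w_{\bint{k-1-L}{k-1}}=\F\g_{k-1}$ already gives it in one line.
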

\begin{proof}
(i) Consider $\B^{full}$ in the interval $[k-L,k-1]$. Its segment $\tilde{w}_\bint{k-L}{k-1}$ can be represented as
\begin{equation}\label{eq:trajTransit}
    \tilde{w}_\bint{k-L}{k-1}=\Pi_p\tilde{w}_{k-1}=\Pi_p\F \g_{k-1},
\end{equation}
where $\Pi_p=\begin{bmatrix}
        0_{L\mathsf{w}\times\mathsf{w}} & I_{L\mathsf{w}}
    \end{bmatrix}$. Substituting into the representation of $\B^{full}_\bint{k-L}{k}$ and rearranging give 
\begin{equation}\label{eq:stateRep}
	\F \g_k-\begin{bmatrix}
		\Pi_p\F\\ 0 
	\end{bmatrix}\g_{k-1}-\begin{bmatrix}
		0\\I
	\end{bmatrix}w_k=0.
\end{equation}
Clearly, this representation shows that $\lag{\B^{full}}\leq 1$. Now, let $(\tilde{w}^1,\tilde{\g}^1)$, $(\tilde{w}^2,\tilde{\g}^2)\in\B^{full}$. If $\g_{k-1}^1=\g_{k-1}^2$, then $\tilde{w}_{k-1}^1=\tilde{w}_{k-1}^2$ by Lemma \ref{lem:fundLemma}, and, in particular, $w_{k-1}^1=w_{k-1}^2$. By Lemma \ref{lem:weaving}, the minimum number of overlapping steps required for trajectory weaving is at most 1, which means that $(\tilde{w}^1_\bint{0}{k-1}\wedge\tilde{w}^1_\bint{k}{\infty},\tilde{\g}^1_\bint{0}{k-1}\wedge\tilde{\g}^1_\bint{k}{\infty})\in\B^{full}$. As such, $\g$ has the property of state by Definition \ref{defn:state} and is therefore a state variable for $\B$.

(ii) Let $(\tilde{w}^1,\tilde{\g}^1)$, $(\tilde{w}^2,\tilde{\g}^2)\in\B^{full}$. If $\tilde{w}_1=\tilde{w}_2$, then $\tilde{w}^1_k=\tilde{w}^2_k$, which means that $\F(\g^1_k-\g^2_k)=0$. Since $\F$ is of full column rank, this only holds when $\g^1_k=\g^2_k$. Extending this illustration for all $k\in\mathbb{Z}_{\geq0}$ shows that $\tilde{w}^1=\tilde{w}^2$ implies $\tilde{\g}^1=\tilde{\g}^2$, hence $\g$ is observable from $w$.
\end{proof}
\begin{rem}
    The proof of the state property of $\g$ in \eqref{eq:fullColRank} also applies to the parameterizer $g$ in \eqref{eq:wFullBehavior}. However, due to the rank deficiency of $\Hk$, the mapping from $g_k$ to $\tilde{w}_k$ is only injective, which means that $g$ is not observable from $w$, making $g$ a less desirable state variable.  As such, the developments in the rest of this paper are centred around $\g$.
\end{rem}

Theorem \ref{thm:gState} shows that the parameterizer $\g$ is a state variable that naturally arises from the fundamental lemma, and is therefore an intrinsic state variable for the system behavior. A closer observation of \eqref{eq:stateRep} shows that it is in the form of the state space representation with backward shifting of $\g$ instead of forward. Furthermore, this state variable does not imply any causality among the elements in variable $w$. 

    

Using the rank requirement of $\Hk$ in Lemma \ref{lem:fundLemma}, we see that the dimension of $\g$ in \eqref{eq:fullColRank} is $(L+1)\mathsf{u}+\n{\B}$. In other words, $\g$ carries the information of the entire manifest variable trajectory. Furthermore, the observability of $\g$ means that one can directly obtain a unique state map of $\g_k$ for any $\tilde{w}_k\in\B_\bint{k-L}{k}$ as 
\begin{equation}\label{eq:stateMap}
    \g_k=\F^\dagger\tilde{w}_k.
\end{equation}
The combination of \eqref{eq:fullColRank} and \eqref{eq:stateMap} leads to a bijective mapping between $\tilde{w}_k$ and $\g_k$, allowing the discussions on one of them to be directly transferred to the other. As will be demonstrated in the next few sections, this mapping will simplify both the control design and implementation procedures significantly.

\section{Parameterizer-based Lyapunov Functions for Stability Design}\label{sec:stability}
\revise{In the behavioral framework, stability is defined on the manifest variables. The behavior of a dynamical system is asymptotically stable if, for all $\tilde{w}\in\B$, $\lim_{k\rightarrow\infty}\|w_k\|_2=0$ \cite{Polderman:1998}.} Traditionally, many stability design approaches for LTI systems rely on the construction of state-dependent Lyapunov functions. This is, however, not possible in the data-based setting due to the lack of the state information, which has led to the development of quadratic Lyapunov functions defined on the manifest variable trajectories \cite{Willems:1998,Kojima:2005}. However, the length of trajectory segments required to construct such Lyapunov functions is only known to be bounded by the system lag $\lag{\B}$, with its exact value unknown. In such a case, the coefficient matrix of the Lyapunov function is in general only positive semi-definite with unknown number of zero eigenvalues, making control synthesis more difficult.

Since the parameterizer has the state property, the immediate question is therefore whether the data-based stability analysis and stabilization design can be carried out by searching for a memoryless \revise{(i.e., lag zero)}  parameterizer-based Lyapunov function. We will show in this section that this is indeed the case, and the design procedure flows naturally using the full behavior \eqref{eq:fullColRank}.
\subsection{Parameterizer-based Lyapunov Functions}\label{subsec:stability}
We begin by focusing on the stability analysis of autonomous LTI systems. A dynamical system is autonomous if its trajectories are entirely characterized by their initial segments, i.e., for $\tilde{w}^1, \tilde{w}^2\in\B$, if $\tilde{w}_\bint{k-L}{k-1}^1=\tilde{w}_\bint{k-L}{k-1}^2$, then $\tilde{w}_\bint{k-L}{\infty}^1=\tilde{w}_\bint{k-L}{\infty}^2$.

\begin{prop}\label{prop:LF}
	Let $\Sigma$ be an autonomous LTI system whose full behavior $\B^{full}$ is represented by \eqref{eq:fullColRank}. Behavior $\B$ is asymptotically stable if and only if there exists $M=M^\top>0$ such that
    \begin{equation}\label{eq:stabilityLMI}
        \begin{bmatrix}
            M & *\\
            M\F_p^\dagger\Pi_p\F & M
        \end{bmatrix}>0,
    \end{equation}
    where $\F_p$ is the first $L\mathsf{w}$ rows of $\F$, $\Pi_p$ is defined in \eqref{eq:trajTransit}, and $*$ denotes the symmetric counterpart of its transposed block. In such a case, the quadratic function $V=\|\g\|_M^2$ is a Lyapunov function for $\B$.
\end{prop}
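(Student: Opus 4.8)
The plan is to reduce the behavioral stability question for $\B$ to a finite-dimensional discrete-time Lyapunov inequality for the parameterizer recursion, and then to recast that inequality in the displayed LMI form \eqref{eq:stabilityLMI} through a Schur complement.

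First I would extract the state recursion. Two consecutive windows $\tilde w_{k-1}$ and $\tilde w_k$ share the overlap $\tilde w_{\bint{k-L}{k-1}}$: inside $\tilde w_k=\F\g_k$ this overlap is the first $L\mathsf{w}$ rows, namely $\F_p\g_k$, while inside $\tilde w_{k-1}=\F\g_{k-1}$ it is $\Pi_p\F\g_{k-1}$ as in \eqref{eq:trajTransit}. Matching the two copies gives $\F_p\g_k=\Pi_p\F\g_{k-1}$. Because the system is autonomous and carries no free component, I would argue that this relation alone pins down $\g_k$, yielding the autonomous recursion $\g_k=A\g_{k-1}$ with $A\coloneqq\F_p^\dagger\Pi_p\F$.

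The step I expect to be the main obstacle is justifying that $A$ is well defined, i.e.\ that $\F_p$ has full column rank, so that $\F_p^\dagger\F_p=I$ and $\g_k=\F_p^\dagger\Pi_p\F\g_{k-1}$ is an exact identity rather than a mere least-squares solution. For this I would combine autonomy with observability: every $\g\in\sR{\n{\B}}$ corresponds to a valid segment $\F\g\in\B_{\bint{k-L}{k}}$, and if two such states satisfy $\F_p\g=\F_p\g'$ then their initial parts $\tilde w_{\bint{k-L}{k-1}}$ coincide; autonomy then forces the entire continuation, hence the full windows $\F\g$ and $\F\g'$, to coincide, and injectivity of $\F$ yields $\g=\g'$. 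Thus $\ker\F_p=\{0\}$ and $\F_p$ has full column rank.

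The remaining steps are routine. I would identify the behavioral stability notion $\|w_k\|_2\to0$ with state convergence $\g_k\to0$: full column rank of $\F$ makes $\|\g_k\|$ and $\|\tilde w_k\|$ equivalent, and since $\tilde w_k$ is a sliding window of the $w$-values, $\|\tilde w_k\|\to0$ is in turn equivalent to $\|w_k\|\to0$. Consequently $\B$ is asymptotically stable if and only if the recursion $\g_k=A\g_{k-1}$ is, which by the standard discrete-time Lyapunov theorem holds if and only if there exists $M=M^\top>0$ with $A^\top MA-M<0$. A Schur complement with respect to the $(2,2)$ block of \eqref{eq:stabilityLMI} shows that $M>0$ together with $A^\top MA-M<0$ is exactly \eqref{eq:stabilityLMI}, once the off-diagonal block is read as $MA=M\F_p^\dagger\Pi_p\F$. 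Finally, along any trajectory $V(\g_k)-V(\g_{k-1})=\g_{k-1}^\top(A^\top MA-M)\g_{k-1}<0$ for $\g_{k-1}\neq0$, so $V=\|\g\|_M^2$ is a Lyapunov function for $\B$.
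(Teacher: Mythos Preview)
Your proposal is correct and follows essentially the same route as the paper: derive the autonomous recursion $\g_k=\F_p^\dagger\Pi_p\F\,\g_{k-1}$ from the overlap condition, invoke the standard discrete-time Lyapunov characterization of Schur stability, and pass to \eqref{eq:stabilityLMI} by a Schur complement. The one noteworthy difference is in justifying that $\F_p$ has full column rank: the paper argues via the dimension bound $\n{\B}\leq\lag{\B}\mathsf{w}\leq L\mathsf{w}$, whereas your autonomy-plus-injectivity argument (equal initial windows force equal full windows, hence equal parameterizers) is more direct and arguably more complete, since a tall matrix need not have full column rank on dimension grounds alone.
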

\begin{proof}
    Since $\g$ is observable from $w$, we have that $w$ is asymptotically stable if and only if $\g$ is. For autonomous systems, the $k$th step $w_k$, and hence the trajectory segment $\tilde{w}_k$, is completely determined by $\tilde{w}_\bint{k-L}{k-1}$. Furthermore, it is not difficult to see that, for a time-invariant system, $\n{\B}\leq\lag{\B}\mathsf{w}$. This means that $\F_p$, with $L\mathsf{w}$ rows and $\n{\B}$ columns, is of full column rank.
    
    Using a similar construction of $\tilde{w}_\bint{k-L}{k-1}$ to that in \eqref{eq:trajTransit} with \eqref{eq:fullColRank} as the representation, the $k$th step parameterizer $\g_k$ can be computed as
	\begin{equation}\label{eq:stabilityRecur}
		\F_p\g_k=\Pi_p\F \g_{k-1} \ \Rightarrow \ \g_k=\F_p^\dagger\Pi_p\F \g_{k-1}.
	\end{equation}
    It then follows that $\g$ is asymptotically stable if and only if all eigenvalues of $\F_p^\dagger\Pi_p\F$ are within the unit disk. This is well-known to be equivalent to the existence of a symmetric matrix $M>0$ such that
    \begin{equation}
        M-(M\F_p^\dagger\Pi_p\F)^\top M^{-1}(M\F_p^\dagger\Pi_p\F)>0,
    \end{equation}
    which is equivalent to \eqref{eq:stabilityLMI}. Pre- and post-multiplying $\g_{k}^\top$ and $\g_{k}$ lead to the inequality
    $\|\g_k\|_M^2-\|\g_{k-1}\|_M^2<0$ for all non-zero $\g_k$ and $\g_{k-1}$. Combining this with the positive-definiteness of $M$ shows that $V_k$ is a Lyapunov function of $\B$. 
\end{proof}

The proof of Proposition \ref{prop:LF} shows that, in the case of autonomous systems, the parameterizer $\g$ exhibits a similar structure to the conventional state representations. In fact, the dimension of $\g$ in this case is $\n{\B}$ (\revise{$\mathsf{u}$ in Lemma \ref{lem:fundLemma} is zero because autonomous systems do not have free components}), which means that it is a minimal state variable for $\B$.

\begin{rem}
    If the representation \eqref{eq:wFullBehavior} were used, then the condition in Proposition \ref{prop:LF} is only a (possibly rather conservative) sufficient condition because the asymptotic stability of $w$ is not necessarily equivalent to that of the parameterizer $g$ in \eqref{eq:wFullBehavior}. It is only through the use of $\g$ as a state can a necessary and sufficient condition for asymptotic stability be obtained.
\end{rem}

\subsection{Trajectory Stabilization Using parameterizer-based Control Lyapunov Function}\label{subsec:stabilization}
Control in the behavioral framework is viewed as an interconnection between the to-be-controlled system and the controller, and the controlled behavior is the common behavior of the two (and thus a \emph{sub-behavior} of the uncontrolled behavior). The key issue of control design is the feasibility of the controlled behavior, which consists of 3 components \cite{Willems:2002}, namely, \emph{existence}: whether the desired behavior is contained in the uncontrolled behavior; \emph{implementability}: whether the desired behavior can be implemented by the manipulated variable alone; and \emph{liveness}: whether the free components remain free in the controlled behavior. In the context of data-based receding horizon control, free components include past trajectory and future external variable (e.g., exogenous disturbance) trajectory, if there is any \cite{Yan:2023}.

We now consider the stabilization of LTI systems. In practice, it is desirable to stabilize all manifest variables (including manipulated variables) so that the system is internally stable. By a similar argument to that in Section \ref{subsec:stability}, the observable representation \eqref{eq:fullColRank} transforms the stabilization of $w$ into that of the parameterizer $\g$. As such, the stabilization problem is translated to the existence of a sub-behavior of $\g$ that is asymptotically stable.

Using the full behavior \eqref{eq:fullColRank}, the $k$th step parameterizer $\g_k$ can be obtained in a similar fashion to \eqref{eq:stabilityRecur} as
\begin{equation}\label{eq:gSelectStability}
    \g_k=\F_p^\dagger\Pi_p\tilde{w}_{k-1}+\F_p^\perp z_k'=\F_p^\dagger\Pi_p\F \g_{k-1}+\F_z z_k,
\end{equation}
where $\F_p$ and $\Pi_p$ are defined similarly as those in Proposition \ref{prop:LF}, $\F_z$ is a matrix of full column rank such that $\cs(\F_z)=\cs(\F_p^\perp)$ and $z_k$ can be understood as a virtual ``manipulated variable''. The existence of an asymptotically stable sub-behavior of $\g$ is therefore equivalent to the existence of $z_k$ that renders $\g$ in \eqref{eq:gSelectStability} asymptotically stable (i.e., the stabilisability of $\g$ in \eqref{eq:gSelectStability}).

\begin{prop}\label{prop:stabilisability}
    Let $\Sigma$ be an LTI system with $\B^{full}$ represented by \eqref{eq:fullColRank}. Let $\mathcal{S}\in\sR[\mathsf{g}]{\mathsf{g}}$ be a full rank matrix such that
    \begin{equation}
        \mathcal{S}^{-1}\F_p^\dagger\Pi_p\F \mathcal{S}=\begin{bmatrix}
            A_{11} & A_{12}\\ 0 & A_{22}
        \end{bmatrix}, \ \mathcal{S}^{-1}\F_z=\begin{bmatrix}
            B\\0
        \end{bmatrix},
    \end{equation}
    where $(A_{11},B)$ is controllable (i.e., the rank of the controllability matrix formed by $A_{11}$ and $B$ is of full row rank). Then there exists an implementable controlled behavior $\B_c\subset\B$ that is asymptotically stable if and only if all eigenvalues of $A_{22}$ are within the unit disk.
\end{prop}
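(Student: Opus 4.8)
The plan is to recognize Proposition \ref{prop:stabilisability} as the behavioral-framework incarnation of the classical discrete-time stabilizability criterion (the Kalman-decomposition/Hautus test), applied to the $\g$-recursion in \eqref{eq:gSelectStability}. First I would reduce the problem to one about $\g$ alone: by the observability of $\g$ from $w$ established in Theorem \ref{thm:gState}, asymptotic stability of the manifest behavior is equivalent to that of $\g$, so the existence of an asymptotically stable implementable $\B_c\subset\B$ is equivalent to the stabilizability of $\g$ governed by \eqref{eq:gSelectStability}, where the virtual input $z_k$ selects the free part of $\g_k$ given $\g_{k-1}$.

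Next I would change coordinates by $\xi_k=\mathcal{S}^{-1}\g_k$ and partition $\xi_k=\col(\xi_k^1,\xi_k^2)$ conformably with the block structure in the statement. In these coordinates the recursion \eqref{eq:gSelectStability} decouples into
\begin{equation}
\xi_k^1=A_{11}\xi_{k-1}^1+A_{12}\xi_{k-1}^2+Bz_k,\quad \xi_k^2=A_{22}\xi_{k-1}^2.
\end{equation}
The essential observation, which drives both directions, is that the $\xi^2$-subsystem is \emph{autonomous}: it evolves under $A_{22}$ and is entirely unaffected by the choice of $z$.

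For necessity I would argue contrapositively. Since the initial segment (past trajectory) is a free component, liveness of any admissible controlled behavior forces it to contain trajectories with $\xi_0^2\neq0$ lying in an $A_{22}$-invariant direction. If $A_{22}$ had an eigenvalue on or outside the unit circle, the corresponding $\xi_k^2=A_{22}^k\xi_0^2$ would fail to converge to zero no matter how $z$ is chosen; since $\mathcal{S}$ is nonsingular this keeps $\g_k$, and hence $w_k$, from decaying, contradicting asymptotic stability of $\B_c$. For sufficiency I would invoke controllability of $(A_{11},B)$ to pick, by pole placement, a gain $K$ with $A_{11}+BK$ Schur, and set $z_k=K\xi_{k-1}^1$. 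The closed loop is then block upper-triangular with diagonal blocks $A_{11}+BK$ and $A_{22}$, whose spectra all lie in the open unit disk by construction and by hypothesis, so $\xi_k\to0$ and therefore $\g_k\to0$. Mapping this feedback back through $\mathcal{S}$ and the state map \eqref{eq:stateMap} defines the desired $\B_c$ and furnishes a quadratic certificate along the lines of Proposition \ref{prop:LF}.

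The main obstacle is not the linear-algebra core, which is the standard stabilizability theorem, but rather the behavioral bookkeeping around it: in the necessity direction one must argue carefully that the destabilizing initial condition is genuinely realizable because the past trajectory is free, and in the sufficiency direction one must verify that the state-feedback law $z_k=K\xi_{k-1}^1$ indeed corresponds to an \emph{implementable} sub-behavior in which the free components (past trajectory and any exogenous inputs) are preserved, so that the resulting $\B_c$ is a legitimate controlled behavior in the sense of \cite{Willems:2002} rather than merely an abstract closed-loop map.
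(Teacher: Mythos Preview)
Your proposal is correct and follows essentially the same route as the paper: change coordinates via $\mathcal{S}$, observe that the second block is autonomous under $A_{22}$ and hence must already be Schur, and use controllability of $(A_{11},B)$ to stabilize the first block. The paper's own proof is in fact terser than yours---it does not spell out the liveness argument for necessity or the implementability check for sufficiency that you flag as the main obstacles, but simply asserts that $z_k$ has no effect on $\g_2'$ and that any stabilizing $z_k$ for $\g_1'$ suffices, then reads off $u_k=\Pi_u\F\g_k$---so your additional behavioral bookkeeping is a welcome elaboration rather than a deviation.
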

\begin{proof}
    \revise{Let $\g'=\mathcal{S}^{-1}\g$. The dynamics of $\g'$ can be derived from \eqref{eq:gSelectStability} as
    \begin{equation}
        \begin{bmatrix}
            \g'_{1_k}\\ \g'_{2_k}
        \end{bmatrix}=\begin{bmatrix}
            A_{11} & A_{12}\\ 0 & A_{22}
        \end{bmatrix}\begin{bmatrix}
            \g'_{1_{k-1}}\\ \g'_{2_{k-1}}
        \end{bmatrix}+\begin{bmatrix}
            B\\0
        \end{bmatrix}z_k,
    \end{equation}
    where $\g_1'$ and $\g_2'$ are partitions of $\g'$ conforming to those of the coefficient matrices. Since $(A_{11},B)$ is controllable but $z_k$ has no effect on $\g_2'$, $\g'$ can be stabilized by $z$ if and only if $\g_2'$ is already asymptotically stable, which is equivalent to $A_{22}$ having all its eigenvalues within the unit disk. In such a case, any $z_k$ that stabilizes $\g_1'$ will stabilize $\g'$, and therefore $\g$.} By observability, the corresponding trajectory $\tilde{w}$ is also asymptotically stable, and the implementing manipulated variable in the $k$th interval is
    \begin{equation}\label{eq:ukSel}
        u_k=\Pi_u\F \g_k,
    \end{equation}
    where $\Pi_u$ selects $u_k$ from $\tilde{w}_k$.
\end{proof}

\begin{rem}
    While \eqref{eq:gSelectStability} resembles a state equation in the classical representation \revise{(i.e., of the form $x_{k+1}=Ax_k+Bu_k$), the underlying characteristics of the parameterizer is fundamentally different. Specifically, the virtual manipulated variable $z$ restricts the behavior of $\g$, which then selects the trajectories of all manifest variables simultaneously. In other words, there is no prior assumption of causality among manifest variables.}
\end{rem}

Proposition \ref{prop:stabilisability} validates the existence of a controlled behavior, but does not provide a  construction of the controller. As illustrated in Section \ref{sec:preliminaries}, the key issue of control design is to find a feasible controlled behavior. In this paper, we aim to find a controlled behavior that is directly parameterized by its free components because the existence of such a behavior automatically guarantees implementability and liveness. Since there is no external variable in this case, the only free component is the past trajectory $\tilde{w}_\bint{k-L}{k-1}$, which is parameterized by $\g_{k-1}$, it is sensible to construct a controller that renders the controlled behavior to have the representation
\begin{equation}\label{eq:controlledStateFB}
    \tilde{w}_k=\F_c \g_{k-1},
\end{equation}
where $\F_c$ is to be determined. This allows the $k$th segment of the controlled behavior, hence the $k$th step manipulated variable $u_k$, to be directly constructed from the state of the previous segment.
\begin{thm}\label{thm:stableStateFB}
    Let $\Sigma$ be an LTI system with $\B^{full}$ represented by \eqref{eq:fullColRank}. There exists a controller that renders the controlled behavior (in the form of \eqref{eq:controlledStateFB}) asymptotically stable if and only if there exist matrices $W=W^\top>0$ and $Y\in\sR[\mathsf{u}]{\mathsf{g}}$ such that
	\begin{equation}\label{eq:stabilityStateFBLMI}
			\begin{bmatrix}
			W & * \\
                \F_p^\dagger\Pi_p\F W+\F_zY &  W
			\end{bmatrix}>0.
	\end{equation}
In such a case, a solution for the manipulated variable $u_k$ based on previous step parameterizer $g_{k-1}$ is obtained as
\begin{equation}\label{eq:stabilityStateFBControl}
	u_k=\Pi_u\F(\F_p^\dagger\Pi_p\F+\F_zYW^{-1})\g_{k-1},
\end{equation}
where $\Pi_u$ is the same as that in \eqref{eq:ukSel}.
\end{thm}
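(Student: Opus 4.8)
The plan is to reduce the assertion to a standard discrete-time static state-feedback stabilization problem for the parameterizer dynamics, and then to linearize the resulting (bilinear) Lyapunov inequality into the LMI \eqref{eq:stabilityStateFBLMI} by the usual congruence-and-substitution trick. First I would invoke the observability of $\g$ from Theorem \ref{thm:gState}(ii) to assert that $\tilde{w}$ is asymptotically stable if and only if $\g$ is, transferring the whole problem onto the parameterizer dynamics \eqref{eq:gSelectStability}, i.e., $\g_k=A\g_{k-1}+\F_z z_k$ with $A\coloneqq\F_p^\dagger\Pi_p\F$. The next, and conceptually central, step is to identify a controlled behavior of the prescribed form \eqref{eq:controlledStateFB} with a static state feedback $z_k=K\g_{k-1}$: substituting gives $\g_k=(A+\F_z K)\g_{k-1}$, hence $\tilde{w}_k=\F(A+\F_z K)\g_{k-1}$ and $\F_c=\F(A+\F_z K)$. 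For the converse inclusion, any $\B_c\subset\B$ of the form \eqref{eq:controlledStateFB} must respect the weaving constraint $\F_p\g_k=\Pi_p\F\g_{k-1}$ inherited from $\B$; since $\cs(\F_z)=\cs(\F_p^\perp)=\ker(\F_p)$ and $A$ is a particular solution of that constraint, the closed-loop map $\F^\dagger\F_c$ (well-defined because $\F$ has full column rank) is necessarily of the form $A+\F_z K$. This shows the sought controller exists and stabilizes if and only if some $K$ renders $A+\F_z K$ Schur stable.

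I would then apply the discrete-time Lyapunov characterization: $A+\F_z K$ has all eigenvalues in the open unit disk if and only if there exists $P=P^\top>0$ with $(A+\F_z K)^\top P(A+\F_z K)-P<0$. Taking a Schur complement with respect to the $P$ block gives a $2\times2$ block inequality with diagonal blocks $P$ and lower-left block $P(A+\F_z K)$, which is bilinear in $(P,K)$. To linearize, I perform a congruence transformation by $\diag(W,W)$ with $W\coloneqq P^{-1}$ and set $Y\coloneqq KW$ (so $K=YW^{-1}$, and $Y\in\sR[\mathsf{u}]{\mathsf{g}}$ because $\ker(\F_p)$, hence the column count of $\F_z$ and the row count of $K$, has dimension $\mathsf{u}$, as $\rank(\F_p)=L\mathsf{u}+\n{\B}$ while $\g$ has dimension $(L+1)\mathsf{u}+\n{\B}$). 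The transformed inequality has diagonal blocks $W$ and lower-left block $(A+\F_z K)W=AW+\F_z Y=\F_p^\dagger\Pi_p\F W+\F_z Y$, which is exactly \eqref{eq:stabilityStateFBLMI}; since the change of variables is invertible, the two LMIs are equivalent, delivering both directions simultaneously. Finally, substituting the recovered feedback $z_k=YW^{-1}\g_{k-1}$ into $u_k=\Pi_u\tilde{w}_k=\Pi_u\F\g_k=\Pi_u\F(A+\F_z YW^{-1})\g_{k-1}$ yields \eqref{eq:stabilityStateFBControl}.

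The main obstacle is not the LMI algebra, which is entirely routine, but the reduction in the first paragraph: one must argue carefully that controlled behaviors of the special form \eqref{eq:controlledStateFB} are in bijection with static feedbacks $z_k=K\g_{k-1}$ on \eqref{eq:gSelectStability}, and in particular that the requirement $\B_c\subset\B$ forces the closed-loop map to be of the form $A+\F_z K$. This relies on $\F$ having full column rank (so that $\tilde{w}_k\mapsto\g_k$ is bijective and $\F^\dagger\F_c$ is meaningful) together with the identification $\ker(\F_p)=\cs(\F_p^\perp)=\cs(\F_z)$; getting this correspondence right is precisely what makes the LMI genuinely \emph{necessary and sufficient} rather than merely sufficient.
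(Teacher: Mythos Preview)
Your proposal is correct and follows the same overall architecture as the paper: transfer stability to $\g$ via observability, identify controlled behaviors of the form \eqref{eq:controlledStateFB} with static feedbacks on the parameterizer dynamics \eqref{eq:gSelectStability}, apply the discrete Lyapunov characterization, and linearize by the congruence $W=P^{-1}$, $Y=KW$. The LMI derivation and the recovery of \eqref{eq:stabilityStateFBControl} match the paper exactly.

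The one substantive difference lies in how the reduction to a static feedback $z_k=K\g_{k-1}$ is justified. You argue directly from the weaving constraint: any admissible closed-loop map $G$ satisfies $\F_pG=\Pi_p\F$, hence $G-A$ has columns in $\ker(\F_p)=\cs(\F_z)$, so $G=A+\F_zK$. The paper instead first writes down the Lyapunov decrease condition \eqref{eq:zExistStability} and then proves a standalone quadratic-form lemma (Lemma~\ref{lem:gainConstruct}) which, by completing the square, shows that freeness of $\g_{k-1}$ in that inequality is equivalent to the existence of a \emph{linear} $z_k=K\g_{k-1}$ achieving it; this lemma also yields an explicit parametrization \eqref{eq:gainSol} of all solutions, which the paper exploits in Remark~\ref{rem:stableOpt}. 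Your kernel argument is more elementary and gives the necessity of the LMI more transparently; the paper's route buys the additional description of the full solution set but is a longer detour for the theorem as stated.
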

\begin{proof}
Firstly, the controlled behavior should satisfy $\B_{c\bint{k-L}{k}}\subset\B_{\bint{k-L}{k}}$. This means that $\cs(\F_c)\subset\cs(\F)$, which can be achieved if and only if there exists a matrix $G\in\sR[\mathsf{g}]{\mathsf{g}}$ such that $\F_c=\F G$. As such, the existence of a controlled behavior in the form \eqref{eq:controlledStateFB} is equivalent to the existence of a matrix $G$ such that the transition $\g_k=G\g_{k-1}$ renders the controlled behavior asymptotically stable, which, by observability, is equivalent to the asymptotic stability of $\g$ itself. By Proposition \ref{prop:LF}, this is equivalent to the existence of a Lyapunov function  $V_k=\|\g_k\|_M^2$, where $M>0$, such that
\begin{equation}\label{eq:stabilityRate}
    \|\g_{k-1}\|_M^2-\|\g_k\|_M^2>0
\end{equation}
for all non-zero $\g_k$ and $\g_{k-1}$. Substituting $\g_k$ by \eqref{eq:gSelectStability} gives
\begin{equation}\label{eq:zExistStability}
     \|\g_{k-1}\|_M^2-\|\F_p^\dagger\Pi_p\F \g_{k-1}+\F_z z_k\|_M^2>0.
\end{equation}
The existence of an asymptotically stable controlled behavior is therefore transformed to the existence of $z_k$ such that \eqref{eq:zExistStability} holds for all $\g_{k-1}\neq0$. We now show that this is equivalent to the existence of a matrix $K$ such that $z_k=K\g_{k-1}$ satisfies \eqref{eq:zExistStability} for all $\g_{k-1}\neq0$.
	\begin{lem}\label{lem:gainConstruct}
		Let the behavior of $(v_1,v_2)$ be represented by
		\begin{equation}\label{eq:QFBehavior}
			\begin{bmatrix}
				v_1\\v_2
			\end{bmatrix}^\top\begin{bmatrix}
			Q & * \\ S^\top & -R
		\end{bmatrix}\begin{bmatrix}
		v_1\\v_2
	\end{bmatrix}\geq0
		\end{equation}
	 with $R>0$. Then $v_1$ is free if and only if there exists a matrix $K\in\mathbb{R}^{\mathsf{v}_2\times\mathsf{v}_1}$ such that $v_2=Kv_1$ satisfies \eqref{eq:QFBehavior} for all $v_1$.
	\end{lem}
\begin{proof}
    The \emph{if} part is obvious. For the \emph{only if} part, since $R>0$, \eqref{eq:QFBehavior} has an upper bound for any given value of $v_1$. Specifically, by choosing $v_2=R^{-1}S^\top v_1$, \eqref{eq:QFBehavior} reaches the maximum of $v_1^\top(Q+SR^{-1}S^\top)v_1$. As such, $v_1$ is free if and only if this maximum is non-negative for any $v_1$, which means that $Q+SR^{-1}S^\top\geq0$. In such a case, completing the square with respect to $v_2$ gives
	\begin{equation}
		(v_2-R^{-1}S^\top v_1)^\top R(v_2-R^{-1}S^\top v_1)\leq v_1^\top(Q+SR^{-1}S^\top)v_1.
	\end{equation}
	For a given $v_1$, all solutions of $v_2$ can then be computed by solving from
	\begin{equation}
		R^{1/2}(v_2-R^{-1}S^\top v_1)=U(Q+SR^{-1}S^\top)^{1/2}v_1,
	\end{equation}
	where $U\in\mathbb{R}^{\mathsf{v}_2\times\mathsf{v}_2}$ is any matrix such that $U^\top U\leq I$. As such, all solutions of $v_2$ that satisfy \eqref{eq:QFBehavior} for a given value of $v_1$ can be obtained as
	 \begin{equation}\label{eq:gainSol}
	 	v_2=[R^{-1/2}U(Q+SR^{-1}S^\top)^{1/2}+R^{-1}S^\top]v_1.
	 \end{equation}
 Note that the solution is in the form $v_2=Kv_1$ for some matrix $K$.
\end{proof}
Since $\F_z$ in \eqref{eq:gSelectStability} is of full column rank and $M>0$, we see that \eqref{eq:zExistStability} is in the form of (a stricter version of) \eqref{eq:QFBehavior} with $v_1\rightarrow \g_{k-1}$, $v_2\rightarrow z_k$ and $R\rightarrow \F_z^\top M\F_z$. The existence of $z_k$ in \eqref{eq:zExistStability} is therefore equivalent to the existence of $K$ such that $z_k=K\g_{k-1}$. Define the transformation $\g_k'=M\g_k$. We see that this transformation is invertible, i.e., $\g_k=M^{-1}\g_k'\eqqcolon W\g_k'$. The existence of $K$ such that $z_k=K\g_{k-1}$ satisfies \eqref{eq:zExistStability} for any non-zero $\g_{k-1}$ is therefore equivalent to the existence of $z_k=KW\g'_{k-1}\eqqcolon Y\g_{k-1}'$ such that
\begin{equation}
     \|\g_{k-1}'\|_W^2-\|(\F_p^\dagger\Pi_p\F W+\F_z Y)\g_{k-1}'\|_{W^{-1}}^2>0
\end{equation}
for all $\g_{k-1}'\neq0$. This is equivalent to
\begin{equation}
    W-(\F_p^\dagger\Pi_p\F W+\F_z Y)^\top W^{-1}(\F_p^\dagger\Pi_p\F W+\F_z Y)>0,
\end{equation}
which, in combination with $W=M^{-1}>0$, is equivalent to \eqref{eq:stabilityStateFBLMI}. In such a case, we have $z_k=Y\g_{k-1}'=YW^{-1}\g_{k-1}$. Substituting into \eqref{eq:gSelectStability} gives 
\begin{equation}\label{eq:stabilityStateFB}
    \g_k=(\F_p^\dagger\Pi_p\F+\F_zYW^{-1})\g_{k-1}.
\end{equation}
The solution \eqref{eq:stabilityStateFBControl} follows by substituting \eqref{eq:stabilityStateFB} to \eqref{eq:ukSel}. 
\end{proof}
\begin{rem}\label{rem:stableOpt}
    The solution \eqref{eq:stabilityStateFBControl} is only one of the possible solutions for a given $W$. Once the value of $W$ is obtained, one could use $M=W^{-1}$ to find all solutions for $z_k$ in \eqref{eq:zExistStability} using \eqref{eq:gainSol} so that the controlled behavior is of the form \eqref{eq:controlledStateFB}. In fact, if the problem is to minimize a certain cost function (e.g., an economic cost), then the results in Theorem \ref{thm:stableStateFB} can be used as constraints to ensure stability. Specifically, the existence of solutions to \eqref{eq:stabilityStateFBLMI} guarantees the feasibility of \eqref{eq:stabilityRate} for the system behavior. Using \eqref{eq:gSelectStability} as well as the state map \eqref{eq:stateMap} on the $(k-1)$-th step, \eqref{eq:stabilityRate} can be converted into 
\begin{equation}\label{eq:stableOptimization}
    \begin{bmatrix}
    \|\F^\dagger\tilde{w}_{k-1}\|_M^2 & *\\ \g_k^\top & M^{-1}
\end{bmatrix}>0,
\end{equation}
which is convex in $\g_k$.
\end{rem}

Theorem \ref{thm:stableStateFB} shows that one could synthesize a controlled behavior by finding a parameterizer-based control Lyapunov function, through which the controller can be directly obtained. The vector $\g_k$ therefore has dual functions of a parameterizer and a state variable. Furthermore, using the state map \eqref{eq:stateMap} on the $(k-1)$-th interval, \eqref{eq:stabilityStateFBControl} can be transformed into
 \begin{equation}\label{eq:stabilityTrajFBControl}
 \begin{split}
     u_k&=\Pi_u\F(\F_p^\dagger\Pi_p\F \g_{k-1}+\F_zYW^{-1}\g_{k-1})\\
     &=\Pi_u\F(\F_p^\dagger\Pi_p\tilde{w}_{k-1}+\F_zYW^{-1}\F^\dagger \tilde{w}_{k-1})\\
     &=\Pi_u\F(\F_p^\dagger\Pi_p+\F_zYW^{-1}\F^\dagger)\tilde{w}_{k-1},
     \end{split}
 \end{equation}
which means that, while the design is based on the parameterizer $\g$, the actual controller can be constructed based on the past trajectory. Interestingly, contrary to the classical control design approaches, in which the dynamic output feedback controller design is much more involved than state feedback, the controller based on the past trajectory in \eqref{eq:stabilityTrajFBControl} follows immediately from that built on the parameterizer. This is because classical input/state/output representation has inherent causality assumption while the behavior \eqref{eq:fullColRank} does not. Specifically, in the classical framework, the output trajectory is determined based on the initial state and the input trajectory, hence output feedback is conceptually a system (causality) inversion problem. Using the behavioral framework, the parameterizer alone identifies \emph{all} manifest variable trajectories in a symmetrical way, and thus the control action is a part of the manifest variable trajectory of the controlled behavior, not requiring solving an inversion problem.

\section{Illustrative Example}\label{sec:example}
We consider an LTI system $Y(z)=G(z)U(z)$, where 
\begin{subequations}
        \begin{align}
	G(z)=\begin{bmatrix}
		\frac{-0.2z+0.367}{z-1.083} & \frac{0.6775z+1.198}{z^2-0.324+0.449}\\
		\frac{-0.341z+0.449}{z^2-1.341z+0.449} & \frac{-0.428}{z-1.14}
	\end{bmatrix}
        \end{align}
\end{subequations}
Note that the system described by $G(z)$ is unstable and not minimum phase. Using Theorem \ref{thm:stableStateFB}, the manipulated variable trajectory is obtained from \eqref{eq:stabilityTrajFBControl}. The control performance is shown in Fig. \ref{fig:stability}, which demonstrates that all manifest variables have been asymptotically stabilized.
\begin{figure}[h]
	\centering
	\includegraphics[width=0.6\linewidth]{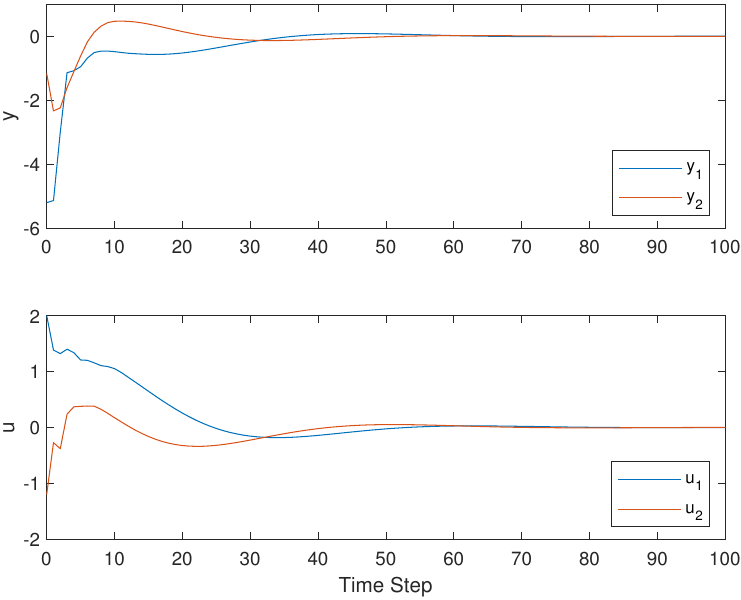}
	\caption{Stabilization Performance}
	\label{fig:stability}
\end{figure}

\section{Conclusion and Future Work}\label{sec:conclusion}
In this paper, we have shown that the parameterizer in Willem's fundamental lemma can be viewed as a state variable for the system behavior. The argument is established from the definition of the state property, and hence does not require system representation or \emph{a priori} input/output partition of the manifest variable. We have shown that the stability and dissipativity of the system can be analysed by constructing memoryless Lyapunov functions using the parameterizer, leading to an elegant approach for the synthesis of stable controlled behavior with the controller behavior explicitly constructed.

The use of the parameterizer as a state can potentially have many other applications beyond what are illustrated in this paper. One of the immediate directions is synthesis of dissipative behaviors for disturbance attenuation, in which case the storage functions, being an extension to Lyapunov functions, can potentially be represented by memoryless quadratic functions of the parameterizer as well. This could lead to a simpler design approach compared with that based on quadratic functions of the system trajectories (e.g., \cite{Yan:2023}). Another possible future direction is in the context of distributed control of interconnected systems, in which the simplification due to the use of state variables compared with that of trajectories may be significant. It is therefore interesting to investigate how the use of parameterizers can be integrated in distributed control design.

\bibliographystyle{IEEEtran}

\bibliography{refs}

\end{document}